\newtheorem{proposition}{Proposition}
\newtheorem{remark}{Remark}
\begin{document}
	\title{Bit Error Rate Analysis for Reconfigurable Intelligent Surfaces  with Phase Errors}
\author{ Im\`ene~Trigui,~\IEEEmembership{Member,~IEEE}, Edouard Komi Agbogla, Mustapha Benjillali,~\IEEEmembership{Senior Member,~IEEE},\\ Wessam Ajib,~\IEEEmembership{Senior Member,~IEEE}, and Wei-Ping Zhu,~\IEEEmembership{Senior Member,~IEEE}
\thanks{I. Trigui, E. K. Agbogla, and  W. Ajib are with the D\'epartement d'informatique, Universit\'e du Qu\'ebec \`a Montr\'eal, Montreal, Canada. E-mails: trigui.imen@courrier.uqam.ca, agbogla.komi\_edouard@courrier.uqam.ca, and ajib.wessam@uqam.ca.}
\thanks{M. Benjillali is with INPT, Rabat, Morocco. Email: benjillali@ieee.org.}
\thanks{W.-P. Zhu is with the Department of Electrical and Computer Engineering, Concordia University, Montreal, Canada. E-mail: weiping@ece.concordia.ca.}
}

\maketitle
\begin{abstract}
In this paper, we analyze the error probability of reconfigurable intelligent surfaces (RIS)-enabled communication systems with quantized channel phase compensation over Rayleigh fading channels. The probability density and characteristic functions
of the received signal amplitude are derived and used to
compute  exact expressions for the bit error rate (BER). The  resulting  expressions  are  general,  as  they  hold  for  an  arbitrary number of reflecting elements $N$, and quantization levels, $L$. We introduce an exact asymptotic analysis in the high signal-to-noise ratio (SNR) regime, from which we demonstrate, in particular, that the diversity order is $N/2$ when $L=2$ and $N$ when $L>2$. The theoretical frameworks and findings are validated with the aid of Monte Carlo simulations.
\end{abstract}
\begin{IEEEkeywords}
Reconfigurable intelligent surfaces, phase
errors, BER analysis, Rayleigh fading, diversity order.
\end{IEEEkeywords}

\section{Introduction}
 Recently, a major design make-over of wireless communications enabled by reconfigurable intelligent surfaces (RISs), has become an  extremely active and promising research topic \cite{basar,ren4}.
Thanks to their capability to easily manipulate the incident wave proprieties (frequency, amplitude,  phase)  to establish favourable channel responses \cite{basar,ren4,qu} (and references therein), RIS offer undoubtedly an additional degree of freedom to boost capacity and enhance coverage with low cost and power-efficient infrastructure in 6G networks. Nevertheless, several hurdles must be overcome to enable RIS-assisted communications and make them work properly. One of the major challenges facing the implementation of RIS is their vulnerability to  phase compensation errors at RIS elements \cite{mal}. In fact, while a suitable design of the phase shifts of the reflecting
elements is necessary to reap the advantages of RIS-aided
transmission \cite{qu,mal, liang, hu}, it is, however,  difficult to implement these phases in practice due to hardware
limitations.  As a consequence, phase quantization errors inevitably arise. 

Motivated by these considerations,  recent  attempts for studying RIS-aided systems in the presence of phase errors include the use of approximate distributions and asymptotic analysis \cite{I1,M2,CL2,CL5,triguiris}. By using the central limit theorem (CLT), the authors of  \cite{I1} and \cite{M2} 
obtained  approximate BER expressions considering a large number of reconfigurable elements at the  RIS.  It was recently shown, however,  that the
approximation error attributed to the CLT can be significant for small number of elements with important discrepancies  in the  high signal-to-noise ratio (SNR) regime~\cite{I1,triguiris}.
  In \cite{CL2},  the authors derived an approximate error performance of RIS in the presence of phase errors by leveraging the moment-based Gamma approximation.
  The Gamma-based framework appears, however, unsuitable for diversity analysis since it fails to extract the full diversity order even in the absence of phase errors. Recently, the authors of  \cite{triguiris} and \cite{CL5} identified sufficient conditions based on upper and lower bounds for ensuring that RIS-assisted systems achieve the full diversity order in the presence of  phase  noise.  More specifically, it was shown in \cite{triguiris} that if  the  absolute  difference  between  pairs of  phase  errors  is  less  than $\pi/2$,  RIS-assisted  communications  achieve  full  diversity.

Although the results from \cite{I1,M2,CL2,CL5,triguiris} are insightful, these works have been successfully tractable
due to approximate SNR distributions and bounds yielding a diversity analysis in the presence of phase errors which is, so far, steadily inaccurate \cite{CL5, triguiris}. Moreover, to the best of our knowledge, no exact error analysis for RIS with quantized phase shifts and arbitrary number of reconfigurable elements  has been reported in the literature.

In this letter, we investigate the impact of quantized phase shifts on the performance of RIS-enabled communications. Analytical expressions for the bit error rate (BER) of binary phase shift keying (BPSK) modulated signals over Rayleigh fading channels, as a function of the number of
phase quantization levels $L$ (also known as phase  resolution) and the number of reflecting elements $N$, are derived based on
 exact expressions for the probability density function (PDF) and the characteristic function (CHF) of
the combined signal amplitude. Approximate expressions
for the BER are derived in the high-SNR regime yielding  
simple closed-form expressions which can be used to determine
the coding gain, the diversity order, and the degradation in the 
system performance. The obtained results unveil, in particular, that for two level quantization, the diversity order  is equal to  half  of  the  reconfigurable  elements number. For higher quantization levels, RIS-assisted  communications  achieve   full  diversity  order, despite the BER is not a linear function of the SNR (in dB scale).

\vspace{-0.2cm}
\section{System Model}
We consider  an RIS with $N$ reconfigurable elements which is deployed  to assist  data transmission to a single antenna receiver by reflecting an incident RF wave emitted by a single antenna transmitter. More specifically, we assume that the direct transmission link between the transmitter and the receiver is blocked, and, thus, the RIS is deployed to relay the scattered signal and to leverage virtual line-of-sight (LOS) paths for enhancing the strength of the received signal.  The received signal of the considered system is  \cite{qu}
\begin{equation}
r=\sum_{i=1}^{N} h_i g_i \rm{e}^{j (\theta_i-\theta_i^{{\cal Q}})}\alpha+ \widetilde{n},
\label{snoise1}
\end{equation}
where $\alpha$ is the baseband transmitted symbol, $\widetilde{n}$ represents a zero mean complex additive white
Gaussian noise (AWGN) with variance $N_0/2$, the channel
amplitudes $h_i$ and $g_i$ are assumed to be Rayleigh distributed with 
$\theta_i$ uniformly distributed between $[-\pi, \pi]$, and $\theta_i^{{\cal Q}}$ is denoting the quantized phase shift induced by the $i$-th reflecting element at the RIS. For an $L$-level uniform quantizer the signal phase can be
modeled as\vspace{-0.1cm}
\begin{equation}
 \theta_i^{{\cal Q}}=\theta_i- \theta_i^{{\cal E}},
\end{equation}
where $\theta_i^{{\cal E}}$
represents the phase error which is uniformly
distributed between $[-\pi/L,\pi/L]$.

In the rest of the work, we consider that $\alpha$ is chosen from a BPSK signal constellation set $\alpha \in \{-1, +1\}$. Hence, only the real
part of the signal has an impact on the detection performance\footnote{Although the analysis in this work is limited to BPSK
modulation for tractability reasons, the proposed method can be used to evaluate the BER of other modulation schemes as well. For instance, for QPSK modulation, the term $v_i$ in (\ref{snoise2}) can be replaced with 
 $v_i=\sqrt{2}\cos(\theta_i^{\cal E}\pm \frac{\pi}{4})$ \cite{ber}.
The detailed analysis of higher-order QAM modulations is more involved, and is an important extension that we will address in future work.}. This reduces \eqref{snoise1} to\vspace{-0.2cm}
\begin{equation}
r=\sum_{i=1}^{N} h_i g_i v_i \alpha+ {\rm Re}[\widetilde{n}],
\label{snoise2}
\end{equation}
where $v_i\triangleq \cos(\theta_i^{{\cal E}})$, and $\rm{Re}[\cdot]$ denotes the real part of a complex quantity.
The instantaneous SNR is defined as
\begin{equation}
\gamma=\rho\left(\sum_{i=1}^{N} h_i g_i v_i\right)^2,
\label{x}
\end{equation}
with $\rho$ denoting the transmit SNR.

\section{Average BER With Phase Noise}
The evaluation of the error probability  usually involves the computation of the PDF of $x=\sqrt{\gamma}$ which is formulated  in  terms   of  a  linear  combination of  the product of random  variables.   A common approach for analyzing the distribution of $x$ is to leverage the CLT. However, this approach is accurate only for a large number of reconfigurable elements \cite{basar, I1, M2} and the resulting analysis is usually not accurate in the high-SNR regime and, therefore, for analyzing the attainable diversity order \cite{triguiris}. In  general,  the calculation  of  the  exact  error probability of RIS-assisted communications with phase noise is  an  open  research  issue,  and  is  very intricate for arbitrary values of $N$.

To tackle this issue, we proceed by writing the BER expression using
the CHF-based approach as \cite{ana}
\begin{equation}
{\cal P}_e=\frac{1}{2\pi}\int_{-\infty}^{\infty}{\cal G}(t)\phi_x^{*}(t)\textrm{d}t,
\label{pe}
\end{equation}
where, under BPSK signalling, we have \cite{ana}
\begin{equation}
{\cal G}(t)=\frac{1}{2t}\left(\frac{t}{\sqrt{\pi}}{}_1{\rm{F}_1}\left( {1,\frac{3}{2};-\frac{t^{2}}{4}} \right)+j-j\rm{e}^{-\frac{t^{2}}{4}}\right),
\end{equation}
where $j=\sqrt{-1}$ and ${}_1{\rm{F}_1}(\cdot)$ represents the confluent hypergeometric
function\cite{grad}.
Moreover, we denote $\phi_x^{*}(t)$ the complex conjugate of the CHF of $x$ defined as\vspace{-0.2cm}
\begin{equation}
\phi_x(t)=\prod_{i=1}^{N} \phi_{z_i}\left(\sqrt{\rho}t\right),
\label{px}
\end{equation}
where $z_i=h_i g_i v_i$, $i=1,\ldots,N$, and $\phi_{z_i}\left(t\right)=\mathbb{E}_{z_i}\{\rm{e}^{j t z_i }\}$, with $\mathbb{E}\{\cdot\}$ denoting the expectation operator.
 Thus, to evaluate the error
probability, we need first to determine the CHF of $z_i$, which is a product of three random variables.

\begin{proposition}
The PDF of $z_i$ can be expressed as\vspace{-0.17cm}
\begin{equation}
f_{z_i}(z)=L {\rm{e}}^{-2z}-\frac{2 L}{\pi} \int_{0}^{\pi/2} \rm{e}^{-2z\left(\sqrt{1+\frac{\tan\left(\pi/L\right)^{2}}{\sin(\psi)^{2}}}\right)} {\rm{d}}\psi.
\label{p1}
\end{equation}
\end{proposition}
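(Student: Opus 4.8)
The plan is to obtain $f_{z_i}$ as a nested mixture, first collapsing the two Rayleigh factors into a closed form and then averaging over the phase-error term $v_i=\cos(\theta_i^{\cal E})$. Writing $z_i=p\,v_i$ with $p\triangleq h_ig_i$, I would first establish that the product of two independent unit-power Rayleigh variables has density $f_p(p)=4p\,K_0(2p)$, which follows from the substitution $u=h_i^2$ together with the identity $\int_0^\infty u^{-1}e^{-u-p^2/u}\,du=2K_0(2p)$. Since $\theta_i^{\cal E}$ is uniform on $[-\pi/L,\pi/L]$ and the cosine is even, the law of $v_i$ may be replaced by that of $\cos\theta$ with $\theta\sim\mathcal{U}[0,\pi/L]$, so a change of variables in $z_i=p\cos\theta$ gives the single integral $f_{z_i}(z)=\frac{4Lz}{\pi}\int_0^{\pi/L}\cos^{-2}\theta\,K_0(2z\sec\theta)\,d\theta$.

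The core of the argument is to reduce this Bessel integral to the stated elementary form. First I would substitute $w=2z\sec\theta$ to symmetrise the kernel, obtaining $f_{z_i}(z)=\frac{2L}{\pi}\int_{2z}^{2z\sec(\pi/L)}\frac{w\,K_0(w)}{\sqrt{w^2-4z^2}}\,dw$. Then I would insert the Laplace-type representation $K_0(w)=\int_w^\infty e^{-y}(y^2-w^2)^{-1/2}\,dy$ and exchange the order of integration, so that the inner integral over $w$ becomes the algebraic-kernel integral $\int \frac{w\,dw}{\sqrt{(w^2-4z^2)(y^2-w^2)}}$, which evaluates in closed form to an arcsine.

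The delicate point is that the $y$-integration splits into two ranges according to whether $y$ or the fixed endpoint $2z\sec(\pi/L)$ is the binding upper limit in $w$. On the range $2z\le y\le 2z\sec(\pi/L)$ the inner integral equals the constant $\pi/2$, and $\frac{2L}{\pi}\cdot\frac{\pi}{2}\int e^{-y}\,dy$ produces precisely the leading term $L e^{-2z}$ together with a residual $-L e^{-2z\sec(\pi/L)}$; on the range $y>2z\sec(\pi/L)$ the inner integral becomes $\arcsin\!\big(2z\tan(\pi/L)/\sqrt{y^2-4z^2}\big)$. Recognising this split and anticipating that the residual will later cancel is the main obstacle, since a naive attempt to integrate by parts directly on the $w$-integral fails because of the integrable singularity at $w=2z$.

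Finally, I would integrate the arcsine term by parts against $e^{-y}$: the boundary contribution at $y=2z\sec(\pi/L)$ equals $\frac{\pi}{2}e^{-2z\sec(\pi/L)}$, which, scaled by $\frac{2L}{\pi}$, cancels the residual $-L e^{-2z\sec(\pi/L)}$ and leaves $\frac{2L}{\pi}\int_{2z\sec(\pi/L)}^\infty e^{-y}\,A'(y)\,dy$, where $A(y)$ denotes the arcsine. The substitution $y=2z\sqrt{1+\tan^2(\pi/L)/\sin^2\psi}$ maps $y:2z\sec(\pi/L)\to\infty$ onto $\psi:\pi/2\to0$; a short computation shows $A'(y)\,dy=d\psi$, and the reversal of the limits supplies the overall minus sign, converting the remaining integral into $-\frac{2L}{\pi}\int_0^{\pi/2}e^{-2z\sqrt{1+\tan^2(\pi/L)/\sin^2\psi}}\,d\psi$ and yielding \eqref{p1}.
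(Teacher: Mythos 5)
Your proof is correct, but it takes a genuinely different route from the paper. The paper pairs the phase-error factor with one Rayleigh variable first: it computes the density of $g_iv_i$ in closed form as $\frac{L}{\sqrt{\pi}}e^{-x^2}\,{\rm erf}\!\left(x\tan(\pi/L)\right)$ (via a tabulated Gradshteyn--Ryzhik integral), then folds in the second Rayleigh factor using the Craig-type representation ${\rm erf}(u)=1-\frac{2}{\pi}\int_0^{\pi/2}e^{-u^2/\sin^2\psi}\,d\psi$, so that the remaining $x$-integral collapses by the standard identity $\int_0^\infty e^{-x^2-a^2/x^2}dx=\frac{\sqrt{\pi}}{2}e^{-2a}$; the two terms of \eqref{p1} appear immediately as the two terms of Craig's formula. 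You instead pair the two Rayleigh factors first, obtaining the double-Rayleigh density $4p\,K_0(2p)$, and then average over the phase error, which forces you to dismantle the Bessel kernel by hand: Laplace representation of $K_0$, Fubini with the range-splitting at $y=2z\sec(\pi/L)$, the arcsine evaluation $\arcsin\!\big(2z\tan(\pi/L)/\sqrt{y^2-4z^2}\big)$, integration by parts with cancellation of the residual $-Le^{-2z\sec(\pi/L)}$, and the closing substitution (under which the arcsine is literally $\psi$, so $A'(y)\,dy=d\psi$ is immediate). I verified each of these steps; they are all sound, including the delicate cancellation. The trade-off: the paper's argument is shorter and leans on tabulated special-function identities, while yours is more self-contained and elementary (nothing beyond the standard $K_0$ representation and calculus), and it passes through the physically meaningful double-Rayleigh density $4pK_0(2p)$ --- at the cost of noticeably more bookkeeping.
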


\begin{proof}
As noted earlier,  $\theta_i^{{\cal E}}$ is uniformly distributed on the interval
$[-\frac{\pi}{L}, \frac{\pi}{L}]$. As a result, the PDF of $v_i$ is given by
\begin{equation}
f_{v_i}(v)=\frac{L}{\pi\sqrt{1-v^{2}}}, \quad\text{for~} \cos(\pi/L)\leq v \leq1.
\label{cs}
\end{equation}
Recall that under independent and identically distributed (i.i.d.) Rayleigh fading $f_{y_i}(x)= 2 x {\rm{e}}^{-x^{2}}$, for $y\in \{h,g\}$, then we use (\ref{cs}) to calculate the PDF  of $z_i=h_i g_i v_i$ as
\begin{equation}
f_{z_i}(z)=\int_{0}^{\infty}\frac{1}{x}f_{h_i}(x)f_{g_i v_i}\left(\frac{z}{x}\right) \textrm{d}x,
\label{eqf}
\end{equation}
where\vspace{-0.22cm}
\begin{equation}
f_{g_i v_i}\left(x\right)=\frac{2L}{\pi}\int_{\cos({\pi/L})}^{1}\frac{\rm{e}^{-\frac{x^{2}}{z^{2}}}}{z^{2}\sqrt{1-z^{2}}}\textrm{d}z.
\label{eq1}
\end{equation}
By relabeling $z= \sin(\theta)$ in (\ref{eq1}), and applying \cite[Eq. (2.33.2)]{grad}, we obtain
\begin{equation}
f_{g_i v_i}\left(x\right)=\frac{L}{\sqrt{\pi}}\rm{e}^{-x^{2}}{\rm erf}\!\left(x \tan\left(\frac{\pi}{L}\right)\right),
\label{eq2}
\end{equation}
 where ${\rm erf}(\cdot)$ is the error function \cite{grad}. Substituting (\ref{eq2}) into (\ref{eqf}) while resorting to the alternative form
 \begin{equation}
  {\rm erf}(z)=1-\frac{2}{\pi}\int_{0}^{\pi/2} \rm{e}^{-\frac{z^{2}}{\sin(\psi)^{2}}} \textrm{d}\psi,
  \end{equation}
we obtain the PDF in (\ref{p1}) with the aid of \cite{grad}. This completes the proof.
\end{proof}

\begin{remark}
Based on Proposition 1, we evince that for 1-bit quantization based RIS communications (i.e., when $L=2$), the PDF of $z_i$ reduces to
\begin{equation}
f_{z_i}(z)=2 \rm{e}^{-2z}, \quad z\geq 0
\label{eqf1}
\end{equation}
which is in agreement with \cite[Eq. (17)]{M2}.
\end{remark}

Using (\ref{p1}), the CHF of $z_i$  follows  from $\phi_{z_i}(t)=\int_{0}^{\infty} f_{z_i}(z){\rm{e}}^{j t z} \textrm{d}z$  and is obtained as
\begin{align}
&\phi_{z_i}(t)=\frac{L}{jt+ 2}\nonumber \\ 
&\quad-\frac{2 L}{\pi}\int_{0}^{\pi/2} \left(jt+2\left(\sqrt{1+\frac{\tan\left(\frac{\pi}{L}\right)^{2}}{\sin(\psi)^{2}}}\right) \right)^{-1} \textrm{d}\psi.
\label{p2}
\end{align}
Note that the CHF given in (\ref{p2}) can be efficiently estimated using
the Gauss-Chebyshev quadrature (GCQ) \cite[Eq.(25.4.38)]{abr} as
\begin{eqnarray}
\!\!\!\!\!\phi_{z_i}(t)\!\!\!&\approx&\!\!\!\frac{L}{jt+ 2}\!-\!\frac{L\pi}{2n}\sum_{k=0}^{n} \frac{\sqrt{1-a^{2}_k}}{jt+2\sqrt{1\!+\!\frac{\tan\left(\frac{\pi}{L}\right)^{2}}{\sin\left(\frac{\pi}{4}(a_k+1)\right)^{2}}}},
\label{p3}
\end{eqnarray}
where $a_k=\cos\left(\frac{\pi}{2n}(2k-1)\right)$. It is important to note that the accuracy of the  GCQ
rule is extremely high; and a  relative accuracy of $10^{-15}$ is possible
for all SNRs and  RIS configurations (i.e., $N$, $L$).  Thus  using (\ref{p3}) and (\ref{px}) together
with (\ref{pe}) can be considered as a replacement
for the closed-form solution to the BER.  In particular, using GCQ leads to
significant computational advantages as compared to the case
when (\ref{pe}) is used with (\ref{p2}).

\section{High-SNR BER-Diversity Analysis}

In this section, we will analyze the communication robustness that can be achieved with
$L$-level quantization-based RIS by focusing on the BER at high SNR. The objective is to quantify the diversity order and identify sufficient conditions for achieving it.   This  is  a  fundamental  open  issue  for  designing  and  optimizing  RIS-aided  systems. So far, by relying on the CLT, it was shown in  \cite{bour}, for example, that the diversity order is $\frac{N}{2}\frac{\pi^{2}}{16-\pi^{2}}$ in Rayleigh fading, which implies that the full diversity order cannot be obtained even in the absence of phase errors. By resorting to some bounds, however, the authors of \cite{triguiris} recently showed that the full diversity order is achievable in Rayleigh fading if  the  absolute  difference  between  pairs of  phase  errors  is  less  than $\pi/2$.

In  what  follows,  building  upon  the  exact high-SNR  analysis  of  BER for arbitrary $N$, we compute the exact diversity order and coding gain of RIS-assisted  systems. In  order  to  evaluate  the  BER at high SNR, according  to  \cite{gian},  we  are  interested  in  analyzing the behavior of
the PDF of $x$ around the origin. Using (\ref{eq1}) and  resorting to the Mellin-Barnes integral representation of the exponential and error functions \cite{mathai}, it follows that
\begin{align}
&f_{z_i}(z)=\frac{4 L}{ \pi (2\pi j)^{2}}\int_{{\cal C}_1} \int_{{\cal C}_2} \frac{\Gamma(s_1)\Gamma(s_2+\frac{1}{2})\Gamma(-s_2)}{\Gamma(1-s_2)} \nonumber \\ 
&\quad\int_{0}^{\infty} {\rm{e}}^{-x^{2}} \left(\frac{z}{x}\right)^{-2 s_1}\left(\frac{z\tan\left(\frac{\pi}{L}\right)}{x}\right)^{-2 s_2} \textrm{d}x \textrm{d}s_1 \textrm{d}s_2.
\label{eq3}
\end{align}
where ${\cal C}_1$ and ${\cal C}_2$ are two integral contours in the complex
domain. Then, with the help of $\int_{0}^{\infty} x^{2b }\rm{e}^{-x^{2}}\textrm{d}x=\Gamma\left(\frac{1}{2}+b\right)/2$ and applying \cite[Eq. (2.1)]{mathai}, we obtain
\begin{align}
f_{z_i}(z)&\overset{(a)}{=}\frac{2 L}{\pi} \nonumber \\
&\hskip-0.8cm\times {\rm H}_{0,1:0,1; 1, 1}^{0,0:1, 0; 1, 2}\left[\!\!\begin{array}{ccc}z^{2} \\ z^{2}\tan\left(\frac{\pi}{L}\right)^{2}\end{array}\!\! \Bigg|  \!\!\begin{array}{ccc} -: -;(1,1)\\ (\frac{1}{2};1,1)\!:\!\left(0,1\right);   \left(\frac{1}{2},1\right), (0,1) \end{array}\Bigg. \!\!\right]\nonumber \\
&\underset{z\rightarrow0}{\overset{(b)}{\approx}}\frac{4 L}{ \pi }z \ln\left(z\tan\left(\frac{\pi}{L}\right)\right)\tan\left(\frac{\pi}{L}\right),
\label{eq4}
\end{align}
where $(a)$ follows from  applying \cite[Eq. (2.57)]{mathai}, and $H[\cdot|  \cdot]$ is the bivariate Fox’s H-function \cite[ Eq. (2.56)]{mathai}. When $z\rightarrow0$, $(b)$ follows by computing  residues at left poles of the two corresponding integrands \cite[Theorem 1.3]{kilbas}.\\
The Laplace Transform  of the approximated $f_{z_i}(z)$ (i.e., when $z\to0$) is
now given by
\begin{equation}
{\cal L}_{z_i}(s)=\frac{2 L \tan\left(\frac{\pi}{L}\right)}{ \pi} \frac{\ln\left(s\right)}{s^{2}}.
\end{equation}
Accordingly, the Laplace transform of the approximated PDF of  $x=\sqrt{\rho}\sum_{i=1}^{N}z_i$ can be formulated as
\begin{equation}
{\cal L}_{x}(s)=\left(\frac{2L \tan\left(\frac{\pi}{L}\right)}{ \pi } \frac{\ln\left(\sqrt{\rho}s\right)}{\rho s^{2}}\right)^{N}.
\label{l1}
\end{equation}
The PDF of $x$ requires the computation of the inverse Laplace transform of (\ref{l1}) which is involved due to the term  $\ln(\sqrt{\rho} s)^{N}$. To proceed with the analysis, we use the following identity
\begin{align}
    \ln(\sqrt{\rho} s)^{N}=&\frac{1}{(2\pi j)^{N}}\int_{{\cal C}_1}\ldots\int_{{\cal C}_N}\frac{\prod_{i=1}^{N}\Gamma(1+t_i) \Gamma(-t_i)^{2}}{\prod_{i=1}^{N}\Gamma(1-t_i)} \nonumber \\ 
    & \quad\left(\sqrt{\rho} s-1\right)^{-\sum_{i=1}^{N}t_i}\textrm{d}t_1\ldots \textrm{d}t_N.
    \label{l4}
\end{align}
Then we plug the above equality into (\ref{l1}) and we compute the inverse Laplace transform with respect to $s$ by using \cite[Eq. (2.21)]{mathai} as
\begin{align}
\!\!\!{\cal L}^{-1}\left\{\left(1\!- a s^{-1}\right)^{-c} s^{-c-2N},x\right\}
&\underset{x\rightarrow 0}{\overset{(b)}{=}}&\!\!\!\frac{ a^{-c} x^{c+2N-1}}{\Gamma(2N+c)},
\label{l2}
\end{align}
where $(b)$ holds since we are interested in values around zero.
Hence, the PDF of $x$ around zero is approximated using (\ref{l1}), (\ref{l4}) and (\ref{l2}), and can be formulated as
\begin{align}
f_x(x)&=\frac{\left(\frac{2 L\tan\left(\frac{\pi}{L}\right)}{ \pi }\right)^{N}x^{2N-1}}{(2\pi j)^{N}}\nonumber \\
& \times\int_{{\cal C}_1}\ldots\int_{{\cal C}_N}\frac{\prod_{i=1}^{N}\Gamma(1+t_i) \Gamma(-t_i)^{2}}{\Gamma(2N+\sum_{i=1}^{N}t_i)\prod_{i=1}^{N}\Gamma(1-t_i)} \nonumber \\
& \times x^{-\sum_{i=1}^{N}t_i}\textrm{d}t_1\ldots \textrm{d}t_N.
\label{l3}
\end{align}
which  can  be  expressed  in  terms  of  the  multivariate  Fox's  H-function \cite[A.1]{mathai} whose details are not provided due to space
limitation. When $x\rightarrow 0$, we apply the asymptotic expansion of the Mellin-Barnes integrals in (\ref{l3}) at  the double poles $t_i=0$ using \cite[Eqs. (1.8.14), (1.4.6)]{kilbas}, thereby yielding
\begin{equation}
f_x(x)\underset{x\rightarrow0}{\approx} \left(\frac{2 L\tan\left(\frac{\pi}{L}\right)}{ \rho \pi }\right)^{N}\frac{x^{2N-1} \ln(\rho x)^{N}}{\Gamma(2N)}.
\label{fx}
\end{equation}
Using (\ref{fx}), we can now  formulate the asymptotic BER as \cite{ana}
\begin{equation}
{\cal P}_\text{e}=\frac{\left(\frac{2 L \tan\left(\pi\right)}{ \rho \pi }\right)^{N}}{2\Gamma(2N)}\int_{0}^{\infty}{\rm erfc}(\sqrt{x})x^{N-1} \ln(\rho \sqrt{x})^{N}\textrm{d}x,
\label{pe3}
\end{equation}
where ${\rm erfc}(x)$ is the complementary error function.
By rewriting the
integrands in (\ref{pe3}) as Fox's H-functions using \cite[Eq. (1.43)]{mathai} and applying \cite[Eq. (2.3)]{mathai}, we get (\ref{pe4}) at the top of the page, where ${\rm H}[\cdot]$ stands for the Fox's H-function \cite[Eq. (1.2)]{mathai}. The asymptotic expression
of the BER when $\rho\to \infty$ can be  obtained
by representing the expression in (\ref{pe4}) as a multiple
Mellin-Barnes type integral and then computing the residues of
the integrands using \cite[Eq. (1.5.12)]{kilbas}.
\begin{figure*}
\begin{eqnarray}
{\cal P}_\text{e}&=&\frac{\left(\frac{2L\tan\left(\frac {\pi}{L}\right)}{ \rho \pi }\right)^{N}}{2\sqrt{\pi}\Gamma(2N)}
\frac{1}{(2\pi j)^{N}}\int_{{\cal C}_1}\ldots\int_{{\cal C}_N}\!\!\!\frac{\prod_{i=1}^{N}\Gamma(1+t_i) \Gamma(-t_i)^{2}}{\prod_{i=1}^{N}\Gamma(1-t_i)} \nonumber \\ && \times ~\frac{(-1)^{-\sum_{i=1}^{N}t_i}}{\Gamma(\sum_{i=1}^{N}t_i)} H_{3,2}^{1,3}\Bigg[ -\rho \left|\begin{array}{ccc} (1-N,\frac{1}{2}), (\frac{1}{2}-N,\frac{1}{2}), (1-\sum_{i=1}^{k}t_i,1) \\ (0,1),(\frac{1}{2},1) (-N,1)
\end{array}\right. \Bigg] \textrm{d}t_1\ldots \textrm{d}t_N.
\label{pe4}
\end{eqnarray} \hrulefill
\end{figure*}
Hence from (\ref{pe4}), we can finally compute the asymptotic BER as
\begin{equation}
{\cal P}_\text{e}=\left(\frac{2 L\tan\left(\frac{\pi}{L}\right)}{\pi }\right)^{N}\frac{\ln(\rho)^{N}\Gamma(N+\frac{1}{2})}{2\sqrt{\pi}(N+1)\Gamma(2N)} \rho^{-N},
\label{pe2}
\end{equation}
 from which the following important conclusions and performance trends are unveiled.
 \begin{itemize}
 \item Due to the presence of $\tan(\frac{\pi}{L})$, (\ref{pe2}) holds for $L>2$.
 \item (\ref{pe2})  unveils a new scaling  law of the BER which is $\ln(\rho)/\rho$ as $\rho\to \infty$.  The
BER is not a linear function with respect to $\rho$ in dB, while
the slope of BER changes very slowly at high SNR. 
 This new scaling law  generalizes  the  definitions of diversity order and coding gain typically used in wireless communications \cite{gian}.
     \item (\ref{pe2}) is, to the best of our knowledge, the first in the literature that yields the exact asymptotic BER  for arbitrary $N$. This is in contrast with
the recently reported expressions in \cite[Eqs. (4), (7)]{basar}, \cite[Eq. (29)]{I1} and \cite[Eq. (39)]{CL2}, which are based on approximations (the CLT in \cite{basar}, \cite{I1} and the moment-based Gamma approximation in \cite{CL2}).
\end{itemize}

In order to better quantify the diversity order of RIS-assisted systems in the presence of quantized  phase noise, it is important to study the case when $L=2$. To this end,  the asymptotic PDF of the normalized SNR $\gamma/\rho$ follows by resorting to the fact that ${\cal L}_{\frac{x}{\sqrt{\rho}}}(s)=\left(2{\cal L}\{\rm{e}^{-2z}, s\}\right)^{N}=\left(\frac{s}{2}+1\right)^{-N}$, and hence
\begin{equation}
f_{\frac{\gamma}{\rho}}(\gamma)\approx \frac{2^{\frac{N}{2}-1}\gamma^{\frac{N}{2}-1}}{\Gamma(N)}\rm{e}^{-2\sqrt{\gamma} },
\end{equation}
which implies, according to \cite{gian}, that the asymptotic error probability is
\begin{equation}
{\cal P}_e\underset{\rho\to \infty}{\approx}\frac{2^{N-1}\Gamma\left(\frac{N}{2}+\frac{1}{2}\right)}{\sqrt{\pi}\Gamma(N+1)}\rho^{-\frac{N}{2}}.
\label{pe5}
\end{equation}
Given (\ref{pe2}) and (\ref{pe5}),  the  diversity and coding gains of RIS-assisted communications with BPSK signalling and quantized phase noise are obtained, as stated in the following proposition.

\begin{proposition}
The diversity order and coding gains of RIS-assisted communications  in the presence of phase noise are
\begin{equation}
{\cal G}_d=\left\{
                                                              \begin{array}{ll}
                                                                N/2, & \hbox{ $ L=2$;} \\
                                                                N, & \hbox{$L>2$.}
                                                              \end{array}
                                                            \right.
\end{equation}
and\vspace{-0.15cm}
\begin{equation}
{\cal G}_c=\begin{cases}
\!\!\left(\dfrac{2^{N-1}\Gamma(\frac{N}{2}+\frac{1}{2})}{\sqrt{\pi}\Gamma(N+1)}\right)^{\!\!-2/N}, & L=2;\\
\!\!\left(\!\!\left(\dfrac{2 L \tan\left(\frac{\pi}{L}\right)}{\pi }\!\!\right)^{\!N}\!\!\dfrac{\ln(\rho)^{N}\Gamma(N+\frac{1}{2})}{2\sqrt{\pi}(N+1)\Gamma(2N)}\!\!\right)^{\!\!-1/N}, & L>2.
\end{cases}
\end{equation}
\end{proposition}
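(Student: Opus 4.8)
The plan is to derive both parts of the proposition directly from the asymptotic BER expressions already established in (\ref{pe2}) and (\ref{pe5}), by invoking the high-SNR parameterization of \cite{gian}, in which the error probability is written as $\mathcal{P}_e \approx (\mathcal{G}_c\,\rho)^{-\mathcal{G}_d}$, so that the diversity order is the SNR exponent, $\mathcal{G}_d = -\lim_{\rho\to\infty}\log\mathcal{P}_e/\log\rho$, and the coding gain is then recovered from the remaining multiplicative prefactor. Since the hard analytical work (the small-argument expansion of the PDF and the inverse-Laplace/Mellin--Barnes manipulations) is already encapsulated in (\ref{pe2}) and (\ref{pe5}), what remains is essentially an extraction exercise carried out separately in each quantization regime.

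First I would treat the $L=2$ case. Here (\ref{pe5}) is a pure power law in $\rho$ with exponent $-N/2$, so the limit defining $\mathcal{G}_d$ is immediate and yields $\mathcal{G}_d = N/2$. Matching the constant prefactor to $(\mathcal{G}_c)^{-N/2}$ and inverting then gives the stated closed form $\mathcal{G}_c = \big(2^{N-1}\Gamma(\tfrac{N}{2}+\tfrac12)/(\sqrt{\pi}\,\Gamma(N+1))\big)^{-2/N}$. For $L>2$, I would start from (\ref{pe2}), where the prefactor now carries an extra $\ln(\rho)^N$ factor. Writing $\log\mathcal{P}_e = N\log(\text{const}) + N\log\ln\rho - N\log\rho$ and dividing by $\log\rho$, the term $\log\ln\rho/\log\rho$ vanishes as $\rho\to\infty$, so the logarithmic factor leaves the diversity order unchanged and I obtain $\mathcal{G}_d = N$, i.e. full diversity. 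The coding gain is then read off by identifying the entire remaining prefactor, including the slowly varying $\ln(\rho)^N$ term, with $(\mathcal{G}_c)^{-N}$ and raising to the power $-1/N$, which reproduces the stated (SNR-dependent) expression.

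The main obstacle is conceptual rather than computational and lies entirely in the $L>2$ regime: because $\mathcal{P}_e$ scales as $\ln(\rho)^N\rho^{-N}$ rather than as a clean negative power of $\rho$, it does not fit the canonical parameterization of \cite{gian}, and I would need to justify the generalized notion used here. The point I would make precise is that the logarithmic factor is asymptotically negligible on the log--log slope, so that the diversity order remains well defined through the limit above, whereas it cannot be discarded from the coding gain, which is therefore retained as an $\rho$-dependent quantity capturing the slow residual degradation of the BER slope. Once this convention is stated explicitly, the matching of exponent and prefactor in both cases is routine.
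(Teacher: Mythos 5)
Your proposal is correct and follows essentially the same route as the paper: the proposition is obtained by matching the already-derived asymptotic expressions (\ref{pe2}) and (\ref{pe5}) to the high-SNR parameterization ${\cal P}_e \approx ({\cal G}_c\,\rho)^{-{\cal G}_d}$ of \cite{gian}, reading the diversity order off the SNR exponent and folding the remaining prefactor (including the slowly varying $\ln(\rho)^N$ term for $L>2$) into the coding gain. Your explicit justification that $\log\ln\rho/\log\rho \to 0$, so the logarithmic factor changes the coding gain but not the diversity order, is precisely the ``generalized scaling law'' point the paper makes informally after (\ref{pe2}).
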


The diversity analysis above helps to discover the effects of an $L$-level\footnote{Considering the fact that energy consumption at RIS increases exponentially with  the number of levels $L$ (also known as quantizer resolution)\cite{res}, low-resolution quantizers, i.e., with small $L$ values, could provide significant energy savings, notably, when the number of elements $N$ at RIS is large.} quantizer
based RIS on the BER system performance. In particular, we conclude that
\begin{itemize}

\item For RIS-aided systems with BPSK signalling over Rayleigh fading,  full diversity order can be ensured if at least three quantization levels (i.e. $L>2$)  are used. This result is in agreement with \cite{CL5, triguiris} where it was proved by using upper and lower bounds.

    \item For $L=2$ the diversity order is only $N/2$.  This is consistent with \cite{CL5}, where it is proved that the diversity order cannot exceeds $(N+1)/2$ for $L=2$. To the   best  of  the  authors  knowledge,  this  difference  in  the  diversity gain between the scenarios $L=2$ and $L>2$ was never reported in the literature.
        \item The increase in the average BER as a quantization penalty is defined as
\begin{align}
&\Psi(\rho, L)=10\log\!\left(\frac{{\cal P}_e}{{\cal P}^{\infty}_e}\right)\nonumber \\ &=\begin{cases}10\log\!\left(\!\dfrac{\rho^{\frac{N}{2}}}{\ln(\rho)^{N}}\dfrac{\Gamma(\frac{N}{2}+\frac{1}{2})(N+1)2^{2N-1}}{N\sqrt{\pi}}\!\right)\!, & \!\!\!L=2; \\
10\log\!\left(\!\dfrac{L\tan(\pi/L)}{\pi}\!\right)\!, & \!\!\!L>2,
\end{cases}
\end{align}
where ${\cal P}^{\infty}_e$ is the average SEP with infinite number of quantization bits obtained form (\ref{pe2})
by using the small-angle approximation $\tan(x)\approx x$ as $x\to 0$.
\end{itemize}

\section{Numerical Results}
In this section, we present analytical and simulated BER results for RIS-assisted communications using an $L$-level quantized phase shifts. The CGQ-based CHF in (\ref{p3}) is used to numerically evaluate the performance of BPSK
signalling over Rayleigh fading channels.

Fig.~1 shows the average BER  as a function of the average transmit SNR (i.e., $\rho$) with $L= 2, 3, 4$ and $N=5$. We observe that the analytical expression of  the  BER  using (\ref{p3}) and  its  corresponding  high-SNR  approximation  in  (\ref{pe2}) and (\ref{pe5}) are  in close agreement with Monte Carlo simulations. In particular, with GCQ, a high level of
accuracy is achieved with only $n=20$. We observe a noteworthy improvement in the average BER when
$L$ changes from 2 to 3. This increase in the average BER performance is expected in the light of Proposition 2, which states that the full diversity order can be achieved as long as $L\geq 3$. When $L=2$, however, the quantization errors induce a diversity gain loss. 
\vspace{-0.2cm}
\begin{figure}[h!]
\psfrag{mmmmmmmmmmmmm}[l][l][0.75][0]{Analytical, $L=2$}
\psfrag{jjjjjjjjjjjjjj}[l][l][0.75][0]{Analytical, $L=3$}
\psfrag{mmmmmmmmmmmmm}[l][l][0.75][0]{Analytical, $L=2$}
\psfrag{jjjjjjjjjjjjjj}[l][l][0.75][0]{Analytical, $L=3$}
\centering
\includegraphics[scale=0.37]{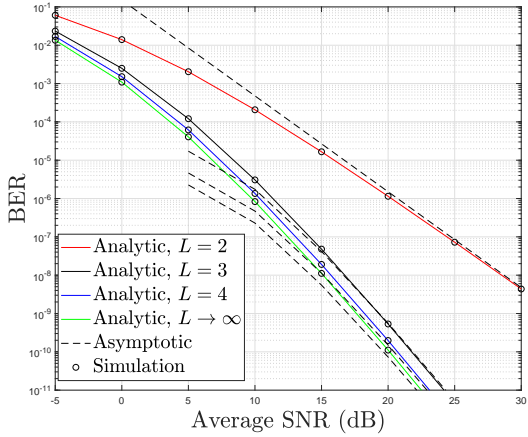}
\caption{ BER versus average SNR for $N=5$.}
\end{figure}

Fig.~2 shows a  comparison between the exact and approximate
BER expressions for several values of $N$ and $L$.   We observe, in particular, that the BER scales with ${{\ln \left( \rho  \right)} \mathord{\left/ {\vphantom {{\ln \left( \rho  \right)} \rho }} \right. \kern-\nulldelimiterspace} \rho }$, as predicted by (\ref{pe4}) and unveiled in \cite{triguiris} over  Rayleigh fading channels and in the absence of phase noise. As expected, the BER decreases significantly as the number $N$ of reconfigurable elements of the RIS increases.

\begin{figure}[h!]
\centering
\includegraphics[scale=0.37]{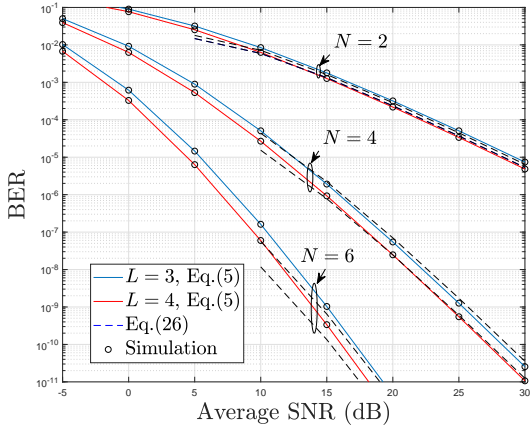}
\caption{BER versus average SNR, for different values of $L$ and $N$.}
\end{figure}
\vspace{-0.2cm}
\section{Conclusion}
This letter investigates the effect of low resolution quantization based-RIS  on the BER system performance. 
By conducting asymptotic analysis, this letter concisely unveils  the diversity order, coding gain and the
degradation in system performance with respect to quantization levels. It was shown, in particular, that  for very low resolution quantizers (i.e. $L=2$), RIS-assisted communications achieve  a diversity order equal to  half of the reconfigurable elements number. For higher values of $L$, however, a full diversity is extracted, despite the BER slowly changing slope in high
SNRs.

\end{document}